\newcommand{\problemtitle}[1]{\gdef\@problemtitle{#1}}%
\newcommand{\problemparameter}[1]{\gdef\@problemparameter{#1}}
\newcommand{\probleminput}[1]{\gdef\@probleminput{#1}}
\newcommand{\problemquestion}[1]{\gdef\@problemquestion{#1}}
\par\addvspace{.5\baselineskip}
  \par\addvspace{.5\baselineskip}
\newtheorem{theorem}{Theorem}[section]
\newtheorem{lemma}[theorem]{Lemma}
\newtheorem*{lemma*}{Lemma}
\newtheorem{observation}[theorem]{Observation}
\newtheorem{question}[theorem]{Question}
\newtheorem{definition}{Definition}
\newtheorem{remark}[theorem]{Remark}
\newcommand{\transenum}{\textsc{Trans-Enum}} 
\newcommand{\domenum}{\textsc{Dom-Enum}} 
\newcommand{\algoa}{{\sf A}} 
\newcommand{\algob}{{\sf B}} 
\newcommand{\N}{\mathbb{N}}
\renewcommand{\NG}{\mathcal{N}(G)}
\renewcommand{\H}{\mathcal{H}}
\DeclareMathOperator{\inc}{{\sf inc}}
\DeclareMathOperator{\parent}{{\sf parent}}
\DeclareMathOperator{\children}{{\sf children}}
\DeclareMathOperator{\priv}{{\sf priv}}
\newcommand{\intv}[2]{\left \{ #1, \dots, #2 \right \}}
\def\figureHypergraphHeight{4.2cm}
\def\figureTreeHeight{4.5cm}
\def\lncsqed{}
\renewenvironment{abstract}
{\small\vspace{-1em}
\begin{center}
\bfseries\abstractname\vspace{-.5em}\vspace{0pt}
\end{center}
\list{}{
\setlength{\leftmargin}{0.4in}%
\setlength{\rightmargin}{\leftmargin}}%
\item\relax}
{\endlist}
\title{%
Hypergraph dualization with \FPT-delay parameterized by the degeneracy and dimension%
\thanks{This work was supported by the ANR project DISTANCIA (ANR-17-CE40-0015) and the Austrian Science
Fund (FWF, project Y1329).}}
\author{
Valentin Bartier\thanks{Univ Lyon, CNRS, INSA Lyon, UCBL, Centrale Lyon, Univ Lyon 2, LIRIS, UMR5205, F-69622 Villeurbanne, France}\and
Oscar Defrain\thanks{Aix-Marseille Université, CNRS, LIS, 13009 Marseille, France}\and 
Fionn Mc Inerney\thanks{Algorithms and Complexity Group, TU Wien, Austria}}
\date{April 22, 2024} 
\begin{document}

\def\longVersion{}

\maketitle

\begin{abstract}
At STOC 2002, Eiter, Gottlob, and Makino presented a technique called \emph{ordered generation} that yields an $n^{O(d)}$-delay algorithm listing all minimal transversals of an $n$-vertex hyqpergraph of degeneracy~$d$. 
Recently at IWOCA 2019, Conte, Kanté, Marino, and Uno asked whether this \XP-delay algorithm parameterized by $d$ could be made \FPT-delay for a weaker notion of degeneracy, or even parameterized by the maximum degree $\Delta$, i.e., whether it can be turned into an algorithm with delay $f(\Delta)\cdot n^{O(1)}$ for some computable function~$f$. 
Moreover, and as a first step toward answering that question, they note that they could not achieve these time bounds even for the particular case of minimal dominating sets enumeration.
In this paper, using ordered generation, we show that an \FPT-delay algorithm can be devised for minimal transversals enumeration parameterized by the degeneracy and dimension, giving a positive and more general answer to the latter question.

\vskip5pt\noindent{}{\bf Keywords:} algorithmic enumeration,  hypergraph dualization, minimal transversals, minimal dominating sets, \FPT-delay enumeration.
\end{abstract}

\section{Introduction}

Graphs and hypergraphs are ubiquitous in
discrete mathematics and theoretical computer science. 
Formally, a \emph{hypergraph} $\H$ consists of a family $E(\H)$ of subsets called \emph{edges} (or \emph{hyperedges}) over a set $V(\H)$ of elements called \emph{vertices}, and it is referred to as a \emph{graph} when each of its edges is of size precisely two.
A \emph{transversal} in a hypergraph is a set of vertices that intersects every edge.
The problem of enumerating the family $Tr(\H)$ of all (inclusion-wise) minimal transversals of a given hypergraph~$\H$, usually denoted by \transenum{} and also known as \emph{hypergraph dualization}, is encountered in many areas of computer science including logic, database theory, and artificial intelligence~\cite{kavvadias1993horn,eiter1995identifying,gunopulos1997data,eiter2008computational,nourine2012extending}.
The best-known algorithm to date for the problem is due to Fredman and Khachiyan~\cite{fredman1996complexity}, and runs in time $N^{o(\log N)}$, where $N=|V(\H)|+|E(\H)|+|Tr(\H)|$.
The question of the existence of an output-polynomial-time algorithm has been open for over 30 years,
and is arguably one of the most important open problems in algorithmic enumeration \cite{johnson1988generating,eiter2008computational}.
An enumeration algorithm is said to run in \emph{output-polynomial} time if it outputs all solutions and stops in a time that is bounded by a polynomial in the combined sizes of the input and the output.
It is said to run with {\em polynomial delay} if the run times before the first output, after the last output,\footnote{Some authors equivalently choose to ignore the postprocessing by assuming that the last solution is kept in memory and output just before halting, see, e.g., \cite{coussinet}.} and between two consecutive outputs are bounded by a polynomial depending on the size of the input only.
Clearly, any polynomial-delay algorithm defines an output-polynomial-time algorithm.
We refer the reader to \cite{johnson1988generating,strozecki2019survey} for more details on the usual complexity measures of enumeration algorithms, and to~\cite{eiter2008computational} for a survey on minimal transversals enumeration.

In 2002, an important step toward the characterization of tractable cases of \transenum{} was made by Eiter, Gottlob, and Makino \cite{eiter2002new}.
The problem, studied under its equivalent formulation of monotone Boolean dualization, was shown to admit a polynomial-delay algorithm whenever the hypergraph has bounded degeneracy, for an appropriate notion of hypergraph degeneracy (see Section~\ref{sec:prelims} for definitions of graph and hypergraph degeneracies) that generalizes graph degeneracy.
Specifically, the authors proved the following result. 

\begin{theorem}[\cite{eiter2002new,eiter2003new}]\label{thm:eiter}
   There is an algorithm that, given an $n$-vertex hypergraph of weak degeneracy $d$, generates its minimal transversals with $n^{O(d)}$-delay.
\end{theorem}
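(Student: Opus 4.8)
The plan is to use the \emph{ordered generation} scheme: fix a vertex ordering witnessing the weak degeneracy and build the minimal transversals incrementally, one prefix at a time. We may assume $\H$ is simple with $\emptyset \notin E(\H)$, since otherwise $Tr(\H)$ is either unaffected or empty. By definition of weak degeneracy, there is an ordering $v_1, \dots, v_n$ of $V(\H)$ such that for every $i \in \intv{1}{n}$ at most $d$ edges are \emph{completed at step $i$}, that is, satisfy $e \subseteq \{v_1, \dots, v_i\}$ but $e \not\subseteq \{v_1, \dots, v_{i-1}\}$; each such edge necessarily contains $v_i$. For $i \in \intv{0}{n}$, let $\H_i$ denote the sub-hypergraph of $\H$ whose edges are those contained in $\{v_1, \dots, v_i\}$, so that $\H_0$ is edgeless, $Tr(\H_0) = \{\emptyset\}$, and $\H_n = \H$. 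The goal is to relate $Tr(\H_i)$ to $Tr(\H_{i-1})$ by a bounded local modification, and then to navigate these relations with polynomial delay.

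The heart of the argument is a one-step lemma asserting that every $N \in Tr(\H_i)$ has one of two shapes: (i) $N = M \cup \{v_i\}$ with $M \in Tr(\H_{i-1})$, or (ii) $N = M \cup Z$ with $M \in Tr(\H_{i-1})$, $Z \subseteq \{v_1, \dots, v_{i-1}\} \setminus M$, and $|Z| \le d$. I would prove it by a short privacy argument. If $v_i \in N$, then $N \setminus \{v_i\}$ is a transversal of $\H_{i-1}$, and it is even minimal: a vertex $v_j$ witnessing non-minimality would, by minimality of $N$ in $\H_i$, own a private edge $e$ with $e \cap N = \{v_j\}$; this $e$ cannot lie in $\H_{i-1}$, hence is completed at step $i$ and contains $v_i$, contradicting $e \cap N = \{v_j\}$ since $v_j \ne v_i$. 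If instead $v_i \notin N$, then $N$ is a transversal of $\H_{i-1}$; picking any $M \in Tr(\H_{i-1})$ with $M \subseteq N$ and setting $Z := N \setminus M$, every $v_k \in Z$ owns a private edge of $N$ in $\H_i$ that is not met by $M$, hence lies outside $\H_{i-1}$ and is completed at step $i$; as distinct vertices of $Z$ own distinct such edges, $|Z| \le d$. The converse direction --- that a candidate set of either shape is easily recognized --- is a polynomial-time check of transversality and minimality.

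The enumeration would then follow the supergraph method. I would build the rooted tree whose layer-$i$ nodes are the sets in $Tr(\H_i)$ and in which the parent of $N \in Tr(\H_i)$ is $N \setminus \{v_i\}$ when $v_i \in N$, and otherwise the lexicographically smallest $M \in Tr(\H_{i-1})$ with $M \subseteq N$ (which exists and, by the lemma, satisfies $|N \setminus M| \le d$); this parent is computable in polynomial time by greedy minimization. The root is $\emptyset$, the depth is $n$, and the depth-$n$ nodes are exactly $Tr(\H)$. Crucially, every node has a child --- for $M \in Tr(\H_{i-1})$, exactly one of $M$ and $M \cup \{v_i\}$ lies in $Tr(\H_i)$, and it has $M$ as parent (when it equals $M$, this uses that $Tr(\H_{i-1})$ is an antichain) --- so every leaf sits at depth $n$. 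Moreover, the children of $M \in Tr(\H_{i-1})$ are found among the $O(n^d)$ candidate sets $M \cup \{v_i\}$ and $M \cup Z$ with $Z \subseteq \{v_1, \dots, v_{i-1}\} \setminus M$ of size at most $d$, each tested in polynomial time for lying in $Tr(\H_i)$ and for having parent $M$. A depth-first traversal that outputs every depth-$n$ node therefore lists $Tr(\H)$ with delay $O(n) \cdot n^{O(d)} = n^{O(d)}$.

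The main obstacle I anticipate is the one-step lemma, and specifically the bound $|Z| \le d$: the argument must squeeze everything out of the fact that all edges completed at step $i$ share the single vertex $v_i$ --- this is what lets $v_i$ alone repair every newly uncovered edge in case (i), and what lets the surplus of $N$ over a minimal sub-transversal of $\H_{i-1}$ be charged injectively to these at most $d$ edges in case (ii). A secondary subtlety is that $n^{O(d)}$ must be an honest \emph{delay} rather than merely an output-sensitive total running time: since $|Tr(\H_i)|$ may be exponentially larger than $|Tr(\H)|$, one cannot materialize the intermediate families $Tr(\H_1), Tr(\H_2), \dots$ layer by layer, which is exactly why the traversal is a DFS of the parent tree.
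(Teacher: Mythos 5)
Your proposal is correct and follows essentially the same route the paper itself takes (Sections~3 and~4): the ordered-generation parent tree over the prefix hypergraphs $\H_i$, the privacy argument showing each child differs from its parent by $v_{i+1}$ or by at most $d$ vertices charged injectively to the $\le d$ newly completed edges, and a DFS over the $O(n^d)$ candidate children per node. The only cosmetic difference is that you define the parent via a canonical minimal sub-transversal rather than the paper's greedy smallest-index removal, which changes nothing in the analysis.
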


In particular, the above theorem captures $\alpha$-acyclic hypergraphs and several other classes of hypergraphs related to restricted cases of conjunctive normal form (CNF) formulas.
Another consequence of Theorem~\ref{thm:eiter} is a polynomial-delay algorithm for the intimately related problem of enumerating minimal dominating sets in graphs, usually denoted by \domenum{}, when the graph has bounded degeneracy. 
A \emph{minimal dominating set} in a graph is a (inclusion-wise) minimal subset of vertices intersecting each closed neighborhood.
As the family of closed neighborhoods of a graph defines a hypergraph, \domenum{} naturally reduces to \transenum{}.
However, quite surprisingly, and despite the fact that not all hypergraphs are hypergraphs of closed neighborhoods of a graph, the two problems were actually shown to be equivalent in \cite{kante2014split}.
This led to a significant interest in \domenum{}, and the characterization of its complexity status in various graph classes has since been explored in the literature~\cite{kante2015chordal,golovach2016chordalbip,golovach2018lmimwidth,bonamy2019kt}.

The theory of parameterized complexity provides a framework to analyze the run time of an algorithm as a function of the input size and some parameters of the input such as degeneracy.
It has proved to be very useful in finding tractable instances of \NP-hard problems, as well as for devising practical algorithms. 
In this context, a problem is considered tractable for a fixed parameter $k$ if it is in \emph{\FPT\ when parameterized by $k$}, that is, if it admits an algorithm running in time $f(k)\cdot n^{O(1)}$ for some computable function $f$, where $n$ is the size of the input. 
It is important to note that the combinatorial explosion of the problem can be restrained to the parameter $k$ here.
There are also weaker notions of tractability in this context, with a problem being in \emph{\XP\ when parameterized by $k$} if it admits an algorithm running in time $n^{f(k)}$.
Studying the tractability of important problems within the paradigm of parameterized complexity has led to a fruitful line of research these last decades, and we refer the reader to~\cite{cygan2015parameterized} for an overview of techniques and results in this field.

The study of algorithmic enumeration from a parameterized complexity perspective is quite novel. 
It may be traced back to works such as \cite{fernau2002parameterized,damaschke2006parameterized}, where the authors devise algorithms listing all solutions of size $k$ in time $f(k)\cdot n^{O(1)}$, where $n$ is the size of the input.
Note, however, that these kinds of results hint at a relatively low number of solutions (typically not superpolynomial in $n$ for fixed $k$), while this is not the case for many enumeration problems including the one that we consider in this paper.
New parameterized paradigms for enumeration problems of this second type were later developed in \cite{creignou2017paradigms,golovach2022refined}, and a growing interest for \FPT-enumeration has since been developing as witnessed by the recent Dagstuhl Seminar on output-sensitive, input-sensitive, parameterized, and approximative enumeration~\cite{fernau2019algorithmic}, and the recent progress on this topic~\cite{meier2018enumeration,conte2019maximal,creignou2019parameterised,meier2020incremental,golovach2022refined}.
Among the different complexity measures one could want to satisfy, the notion of \FPT-delay naturally arises.
An algorithm \algoa{} for an enumeration problem parameterized by $k$ runs with \emph{\FPT-delay} if there exists a computable function $f$ such that \algoa{} has delay $f(k)\cdot n^{O(1)}$, where $n$ in the size of the input. An \XP-delay algorithm for an enumeration problem parameterized by $k$ is analogously defined with the delay being $n^{f(k)}$ instead.
Note that the algorithm given by Theorem~\ref{thm:eiter} is \XP-delay, but not \FPT-delay.
The next question naturally arises within the context of parameterized complexity.

\begin{question}\label{qu:big-question}
    Can the minimal transversals of an $n$-vertex hypergraph of weak degeneracy $d$ be listed with delay
    $f(d) \cdot n^{O(1)}$ 
    for some computable function~$f$?
\end{question}

To the best of our knowledge, no progress has been made on this question to date.
At IWOCA 2019, Conte, Kanté, Marino, and Uno formulated the following weakening by considering a weaker notion of degeneracy $d$ and the maximum degree~$\Delta$ as an additional parameter.
Specifically, they asked the following.

\begin{question}[{\cite{conte2019maximal}}]\label{qu:conte}
    Can the minimal transversals of an $n$-vertex hypergraph be listed with delay
    $2^d \cdot \Delta^{f(d)} \cdot n^{O(1)}$ 
    for some computable function $f$, 
    where $d$ is the strong degeneracy and $\Delta$ is the maximum degree of the hypergraph?
\end{question}

As $d\leq \Delta$ (see Section~\ref{sec:prelims} for the definition of strong degeneracy) such a time bound is \FPT{} parameterized by the maximum degree only.
This formulation, however, is not arbitrary. 
It is chosen to match the complexity of an \FPT-delay algorithm the authors in~\cite{conte2019maximal} give for another related problem---namely the enumeration of maximal irredundant sets in hypergraphs---as well as giving a finer analysis of the dependence on $d$ and $\Delta$.
A set of vertices is \emph{irredundant} if removing any of its vertices decreases the number of edges it intersects.
It is easy to see that minimal transversals are maximal irredundant sets. 
However, the converse is not true, and the two problems of listing minimal transversals and maximal irredundant sets behave differently. 
Indeed, while \transenum{} is known to admit an output quasi-polynomial-time algorithm~\cite{fredman1996complexity}, it was announced in~\cite{boros2016wepa} that the enumeration of maximal irredundant sets is impossible in output-polynomial time unless $\P=\NP$. 
As put in \cite{conte2019maximal}, answering Question~\ref{qu:conte} would help to precise the links between the two problems, as maximal irredundancy is commonly believed to be more difficult than minimal transversality.
In the same paper, the authors state that their algorithm for maximal irredundancy does not apply to minimal transversality, and that they were also unable to turn it into one for graph domination with the same delay.
The next question naturally follows, and may be regarded as a weakening of Question~\ref{qu:conte} that is implicit in \cite{conte2019maximal}.

\begin{question}[{\cite{conte2019maximal}}]\label{qu:conte-dom}
    Can the minimal dominating sets of an $n$-vertex graph be enumerated with delay 
    $2^d \cdot \Delta^{f(d)} \cdot n^{O(1)}$ 
    for some computable function $f$, 
    where $d$ is the degeneracy and $\Delta$ is the maximum degree of the graph?
\end{question}

In this paper, we give a positive answer to Question~\ref{qu:conte-dom}, suggesting that minimal domination and maximal irredundancy are of relatively comparable tractability as far as the degeneracy is concerned.
In fact, we prove a stronger result. That is, we design an \FPT-delay algorithm for \transenum{} parameterized by the degeneracy and the maximum size of a hyperedge, usually referred to as \emph{dimension}, even for the notion of hypergraph degeneracy considered in~\cite{eiter2002new,eiter2003new}, which is less restrictive than the one considered in \cite{conte2019maximal}, a point that is discussed in Section~\ref{sec:prelims}.
More formally, we prove the following result.

\begin{theorem}\label{thm:main}
    The minimal transversals of an $n$-vertex hypergraph of weak degeneracy $d$ and dimension $k$ can be enumerated with delay $k^d \cdot n^{O(1)}$.
\end{theorem}

The \FPT-delay algorithm we obtain is based on the ordered generation technique originally presented by Eiter, Gottlob, and Makino in~\cite{eiter2002new}, combined with elementary arguments on the number of minimal subsets hitting the neighborhood of a well-chosen vertex of bounded ``backdegree''.
As a corollary of Theorem~\ref{thm:main}, we obtain an algorithm solving \domenum{} running within the time bounds required in Question~\ref{qu:conte-dom}.
We then discuss the limitations of the technique and show that it cannot be directly used to get a positive answer to Question~\ref{qu:big-question} unless $\FPT=\W[1]$ (see Section~\ref{sec:discussion} for more details and references). 

Additionally, we believe that our presentation is of valuable interest as far as the ordered generation technique and its limitations in the design of parameterized algorithms are concerned.
Furthermore, we hope that it will raise interest in making headway in the recent and promising line of research concerning the study of enumeration problems in the parameterized setting.

\paragraph{Organization of the paper.} In the next section, we introduce the problem and related notation, and discuss the aforementioned variants of hypergraph degeneracy considered in \cite{eiter2003new,conte2019maximal}.
In Section~\ref{sec:og}, we present the ordered generation technique applied to hypergraphs, and show how it amounts to generating extensions of a partial solution while preserving \FPT-delay.
In Section~\ref{sec:children-generation}, we show how to generate extensions with \FPT-delay parameterized by the weak degeneracy and dimension, proving Theorem~\ref{thm:main}.
The answer to Question~\ref{qu:conte-dom} is given in Section~\ref{sec:dom} as a corollary.
Future research directions and the limitations of the presented techniques are discussed in Section~\ref{sec:discussion}.

\section{Preliminaries}\label{sec:prelims}

\paragraph{Graphs and hypergraphs.}
For a hypergraph $\H$, its vertex set is denoted by $V(\H)$, and its edge set is denoted by $E(\H)$. 
If every edge has size precisely two, then it is referred to as a graph and is usually denoted by $G$.
Given a vertex $v\in V(\H)$, the set of edges that contain $v$ (also called edges \emph{incident} to $v$) is denoted by $\inc(v)$.
The \emph{degree} of $v$ in $\H$, denoted by $\deg_{\H}(v)$, is the number of edges in $\inc(v)$.
The maximum degree of a vertex in $\H$ is denoted by $\Delta(\H)$ (or simply $\Delta$).
The \emph{dimension} of $\H$ is the maximum size of an edge in $\H$.
A \emph{transversal} of $\H$ is a subset $T\subseteq V(\H)$ such that $E\cap T\neq \emptyset$ for all $E\in E(\H)$.
It is \emph{minimal} if it is minimal by inclusion, or equivalently, if $T\setminus\{v\}$ is not a transversal for any $v\in T$.
The set of minimal transversals of $\H$ is denoted by $Tr(\H)$.

In this paper, we are interested in the following problem known to admit an \XP-delay algorithm~\cite{eiter2003new} parameterized by the degeneracy, but for which the existence of an \FPT-delay algorithm parameterized by the same parameter (even in addition to the maximum degree) is open~\cite{conte2019maximal}.

\begin{problemgen}
  \problemtitle{\textsc{Minimal Transversals Enumeration} (\transenum{})}
  \probleminput{A hypergraph $\H$.}
  \problemquestion{The set $Tr(\H)$.}
\end{problemgen}

We introduce notation that is convenient for minimal transversality.
Let $S\subseteq V(\H)$ and let $v$ be a vertex in $S$.
A \emph{private edge of $v$ with respect to $S$} is an edge $E\in E(\H)$ such that $E\cap S=\{v\}$. 
The set of private edges of $v$ with respect to $S$ are denoted by $\priv(S,v)$.
It is well-known that a transversal in a hypergraph is minimal if and only if each vertex it contains has a private edge.


The \emph{degeneracy} of a graph is a well-known parameter that has received considerable attention as a measure of sparsity in a graph \cite{nevsetvril2012sparsity}, as well as a suitable parameter to devise tractable graph algorithms \cite{eiter2003new,alon2009linear,philip2012polynomial,wasa2014efficient}.
For any graph $G$, it corresponds to the least integer $d$ such that every subgraph of $G$ has a vertex of degree at most $d$, and it can be computed in time $O(|V(G)|+|E(G)|)$~\cite{MatulaBeck83}.
Several generalizations to hypergraphs have been proposed in the literature \cite{eiter2003new,kostochka2008adapted,bar2010online,dutta2016subgraphs}.
Among these generalizations, two of them naturally arise depending on the notion of subhypergraph we consider, and were previously considered in the context of hypergraph dualization~\cite{eiter2003new,conte2019maximal}.
We review them here.

Let $\H$ be a hypergraph and $S$ a subset of vertices of $\H$. 
The hypergraph \emph{induced} by $S$ is the hypergraph 
\[
    \H[S]:=(S, \{E\in E(\H) : E\subseteq S\}).
\]
The \emph{trace} of $S$ on $\H$ is the hypergraph 
\[
    \H_{|S}:=(S, \{E \cap S : E\in E(\H),\ E\cap S\neq \emptyset\}).
\]
Several edges in $\H$ may have the same intersection with $S$, but this is only counted once in $\H_{|S}$, i.e., the trace is a set and not a multiset. 
An example of these notions of subhypergraph is in Figure~\ref{fig:hypergraph}.
The two following definitions generalize graph degeneracy (with the strong degeneracy being at most one more than the graph degeneracy), and differ on whether we consider the subhypergraph to be the hypergraph induced by the first $i$ vertices or its trace on these vertices.

\begin{figure}
    \centering
    \includegraphics[height=\figureHypergraphHeight{}]{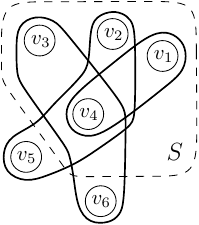}
    \caption{A hypergraph $\H$ on $6$ vertices and $3$ edges. The hypergraph induced by $S:=\{v_1,v_2,v_3,v_4\}$ has a single edge $\{v_1,v_4\}$, while the trace of $S$ on $\H$ contains three edges $\{v_1,v_4\}$, $\{v_2,v_4\}$, and $\{v_3,v_4\}$.}
    \label{fig:hypergraph}
\end{figure}

\begin{definition}[Weak degeneracy, \cite{eiter2003new}]\label{def:strong-degeneracy}
    The weak degeneracy of $\H$ is the smallest integer $d$ such that there exists an ordering $v_1,\dots, v_n$ of its vertices satisfying for every $1\leq i \leq n$: $$\deg_{\H[\{v_1,\dots, v_i\}]}(v_i) \leq d.$$ 
\end{definition}

\begin{definition}[Strong degeneracy, \cite{kostochka2008adapted,conte2019maximal}]\label{def:weak-degeneracy}
    The strong degeneracy of $\H$ is the smallest integer $d$ such that there exists an ordering $v_1,\dots, v_n$ of its vertices satisfying for every $1\leq i \leq n$: $$\deg_{\H_{|\{v_1,\dots, v_i\}}}(v_i) \leq d.$$ 
\end{definition}

The next inequality follows from the fact that, for any subset $S$ of $V(\H)$, an edge $E$ that belongs to $\H[S]$ also belongs to $\H_{|S}$ by definition.

\begin{remark}\label{rem:degeneracy-parameters}
    A hypergraph's weak degeneracy is at most its strong degeneracy.
\end{remark}

Note that the strong degeneracy of a hypergraph can be arbitrarily larger than its weak degeneracy. To see this, consider the hypergraph $\H$ formed by taking a clique on $k$ vertices $v_1,\dots,v_k$ (a graph), and adding one new vertex $u$ in each edge, so that each hyperedge in $\H$ has size $3$. 
Let $U$ be the set of vertices $u$ added in the construction, and consider any ordering of $V(\H)$ such that the elements $v_1,\dots,v_k$ appear before those in $U$.
The weak degeneracy of $\H$ is $1$ since each vertex $u$ has degree $1$ and the hypergraph induced by $v_1,\dots,v_k$ has no edge. On the other hand, the strong degeneracy of $\H$ is at least $k$ since the trace of $\H$ on $v_1,\dots,v_k$ is a clique whose vertices all have degree $k$.

Also, note that while there exist $n$-vertex hypergraphs with $O(2^n)$ edges, any hypergraph that has (weak or strong) degeneracy $d$ contains $O(dn)$ edges. 
This explains why time 
bounds such as those expected in Questions~\ref{qu:conte}~and~\ref{qu:conte-dom}, and obtained in Theorem~\ref{thm:main} and subsequent lemmas in this paper are reasonable.

For both notions of degeneracy, an \emph{elimination ordering} is an order $v_1,\dots, v_n$ as in their definitions.
Recall that Question~\ref{qu:conte} is stated for strong degeneracy~\cite{conte2019maximal}.
In this paper, we show Theorem~\ref{thm:main} to hold for weak degeneracy, which implies the same result for strong degeneracy.
Note that the weak (strong, respectively) degeneracy of a hypergraph $\H$ along with an elimination ordering witnessing that degeneracy can be computed in time $O(|V(\H)|+\sum_{E\in E(\H)}|E|)$~\cite{MatulaBeck83,Shahrokhi22}.

\paragraph{Graph domination.}
If a hypergraph is a graph, then its transversals are usually referred to as vertex covers in the literature.
This is not to be confused with domination, that we define now.
Let $G$ be a graph.
The \emph{closed neighborhood} of a vertex $v$ of $G$ is the set $N[v]:=\{v\}\cup \{u : \{u,v\} \in E(G)\}$.
A \emph{dominating set} in $G$ is a subset $D\subseteq V(G)$ intersecting every closed neighborhood in the graph.
It is called \emph{minimal} if it is minimal by inclusion, and the set of minimal dominating sets of $G$ is denoted by $D(G)$.
In the minimal dominating sets enumeration problem, denoted \domenum{}, the task is to generate $D(G)$ given $G$.

As stated in the introduction, \transenum{} and \domenum{} are polynomially equivalent, one direction being a direct consequence of the easy observation that $D(G)=Tr(\mathcal{N}(G))$ for any graph $G$, where $\mathcal{N}(G)$ denotes the \emph{hypergraph of closed neighborhoods} of $G$ whose vertex set is $V(G)$ and whose edge set is $\{N[v] : v\in V(G)\}$ (see, e.g.,~\cite{berge1984hypergraphs}).
The other direction was exhibited by Kanté, Limouzy, Mary, and Nourine in~\cite{kante2014split} even when restricted to co-bipartite graphs.
However, the construction in \cite{kante2014split} does not preserve the degeneracy.
Hence, the existence of an \FPT-delay algorithm for \domenum{} parameterized by the degeneracy (and more) would not directly imply one for \transenum. 

\section{Ordered generation}\label{sec:og}

Our algorithm is based on the \emph{ordered generation} technique introduced in \cite{eiter2002new} for monotone Boolean dualization, and later adapted to the context of graph domination in \cite{bonamy2019triangle,bonamy2019kt}.
Our presentation of the algorithm is based on these works.

Let $\H$ be a hypergraph and $v_1, \dots, v_n$ an ordering of its vertex set.
For readability, given any $i\in \intv{1}{n}$, we set $V_i:=\{v_1,\dots,v_i\}$, 
    $\H_i:=\H[V_i]$,
and denote by $\inc_i(u)$, for any $u\in V(\H_i)$, the incident edges of $u$ in $\H_i$.
By convention, we set $V_0:=\emptyset$, $\H_0:=(\emptyset, \emptyset)$, and $Tr(\H_0):=\{\emptyset\}$.
Further, for all $0\leq i\leq n$, $S\subseteq V(\H)$, and $v\in S$, we denote by $\priv_i(S,v)$ the private edges of $v$ with respect to $S$ in $\H_i$.
Let us now consider some integer $i\in \intv{0}{n-1}$. 
For $T\in Tr(\H_{i+1})$, we call the \emph{parent of $T$ with respect to $i+1$}, denoted by $\parent(T, i+1)$, the set obtained by applying the following greedy procedure: 

\begin{quote}
    \emph{While there exists a vertex $v\in T$ such that $\priv_i(T, v) = \emptyset$,\\ remove from $T$ a vertex of smallest index with that property.}
\end{quote}

Observe that $\parent(T, i+1)$ is uniquely defined by this routine. 
Given a minimal transversal $T^*$ of $\H_{i}$, we call the \emph{children of $T^*$ with respect to $i$}, denoted by $\children(T^*, i)$, the family of sets $T\in Tr(\H_{i+1})$ such that $\parent(T, i+1)=T^*$. 

\begin{lemma}\label{lem:parent}
    Let $0\leq i\leq n-1$,
    $T\in Tr(\H_{i+1})$. 
    Then, $\parent(T,i+1)\in Tr(\H_i)$.
\end{lemma}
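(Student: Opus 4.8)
The plan is to show that the set $\parent(T, i+1)$ produced by the greedy removal procedure is (a) a transversal of $\H_i$, and (b) a minimal one, which together give membership in $Tr(\H_i)$. The procedure only ever removes vertices, so throughout the process we have a nested sequence $T = T^{(0)} \supseteq T^{(1)} \supseteq \cdots \supseteq T^{(\ell)} = \parent(T,i+1)$, where at each step we delete a vertex with no private edge in $\H_i$.

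First I would argue that $T$ itself is a transversal of $\H_i$: since $T \in Tr(\H_{i+1})$ it hits every edge of $\H_{i+1}$, and $E(\H_i) = \{E \in E(\H) : E \subseteq V_i\} \subseteq \{E \in E(\H) : E \subseteq V_{i+1}\} = E(\H_{i+1})$, so $T$ hits every edge of $\H_i$ as well. Next, the key invariant is that \emph{being a transversal of $\H_i$ is preserved under each removal step}. Indeed, suppose $T'$ is a transversal of $\H_i$ and we remove $v \in T'$ because $\priv_i(T', v) = \emptyset$. For any edge $E \in E(\H_i)$, since $T'$ is a transversal, $E \cap T' \neq \emptyset$; if $E \cap T' = \{v\}$ then $E$ would be a private edge of $v$ with respect to $T'$ in $\H_i$, contradicting $\priv_i(T',v) = \emptyset$. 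Hence $E \cap T'$ contains some vertex other than $v$, so $E \cap (T' \setminus \{v\}) \neq \emptyset$, and $T' \setminus \{v\}$ is still a transversal of $\H_i$. By induction on the number of removal steps, $\parent(T, i+1)$ is a transversal of $\H_i$.

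For minimality, I would invoke Remark~\ref{rem:private}: it suffices to show that every vertex of $\parent(T, i+1)$ has a private edge with respect to $\parent(T,i+1)$ in $\H_i$. This is precisely the termination condition of the greedy loop — the procedure stops only when no vertex $v$ in the current set satisfies $\priv_i(\cdot, v) = \emptyset$, i.e., when every remaining vertex has a nonempty set of private edges in $\H_i$. So at termination, $\parent(T,i+1)$ is a transversal of $\H_i$ in which every vertex has a private edge, hence a minimal transversal, giving $\parent(T,i+1) \in Tr(\H_i)$.

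I do not anticipate a serious obstacle here; the statement is essentially a sanity check that the parent operation is well-typed. The one point requiring a little care is making sure the ``transversal-preserving'' step uses the right hypergraph throughout — private edges are taken in $\H_i$ (not $\H_{i+1}$), and the monotonicity $E(\H_i) \subseteq E(\H_{i+1})$ is what lets us start the induction from $T \in Tr(\H_{i+1})$. One should also note the degenerate base case $i = 0$: then $\H_0 = (\emptyset,\emptyset)$, the loop removes every vertex of $T$ (each has an empty private-edge set vacuously), and $\parent(T,1) = \emptyset = $ the unique element of $Tr(\H_0)$, consistent with the stated convention.
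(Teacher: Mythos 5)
Your proof is correct and follows essentially the same route as the paper's: use $E(\H_i)\subseteq E(\H_{i+1})$ to see that $T$ is a transversal of $\H_i$, observe that removing a vertex with no private edge preserves transversality, and conclude minimality from the termination condition via Remark~\ref{rem:private}. You simply spell out the invariant and the degenerate case $i=0$ in more detail than the paper does.
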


\begin{proof}
    By definition, every edge of $\H_i$ is an edge of $\H_{i+1}$.
    Hence, $T$ is a transversal of $\H_i$ as well.
    Now, as the greedy procedure above only removes vertices $v$ that have no private edge in $\H_i$, the obtained set is a transversal of $\H_i$ at each step, and ends up minimal by construction.
\lncsqed{}\end{proof}

\begin{lemma}\label{lem:child}
    Let $0\leq i\leq n-1$ and $T^*\in Tr(\H_i)$.
    Then, either:
    \begin{itemize}
        \item $T^*\in Tr(\H_{i+1})$~and $\parent(T^*, i+1) = T^*$; or
        \item $T^*\cup \{v_{i+1}\}\in Tr(\H_{i+1})$ and~$\parent(T^*\cup \{v_{i+1}\}, i+1) = T^*$.
    \end{itemize}
\end{lemma}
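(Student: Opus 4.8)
The plan is to analyze what happens when we pass from $\H_i$ to $\H_{i+1}$. The only new edges in $\H_{i+1}$ that are not already in $\H_i$ are edges $E \subseteq V_{i+1}$ with $v_{i+1} \in E$; call these the \emph{new edges}. First I would handle the case split on whether $T^*$ already hits all the new edges. If it does, then $T^*$ is a transversal of $\H_{i+1}$, and since $T^* \in Tr(\H_i)$ every vertex of $T^*$ has a private edge in $\H_i \subseteq \H_{i+1}$, so that private edge is still private in $\H_{i+1}$ (adding vertices to a set only shrinks private edge sets, but here we are not adding vertices, just possibly enlarging the edge set — a private edge $E$ with $E \cap T^* = \{v\}$ in $\H_i$ is still an edge of $\H_{i+1}$ with $E \cap T^* = \{v\}$). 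Hence $T^* \in Tr(\H_{i+1})$. Running the parent procedure for $i+1$ on $T^*$: it removes vertices $v$ with $\priv_i(T^*, v) = \emptyset$, but $T^* \in Tr(\H_i)$ means by Remark~\ref{rem:private} that no such vertex exists, so nothing is removed and $\parent(T^*, i+1) = T^*$.

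For the second case, suppose $T^*$ does not hit some new edge. I would consider $T' := T^* \cup \{v_{i+1}\}$. Every new edge contains $v_{i+1}$, hence is hit by $T'$; every old edge was already hit by $T^*$; so $T'$ is a transversal of $\H_{i+1}$. It need not be minimal, since adding $v_{i+1}$ may have destroyed private edges of other vertices: a vertex $v \in T^*$ whose only private edges with respect to $T^*$ in $\H_i$ all contain $v_{i+1}$ now has no private edge. However — and this is the crux — I claim $v_{i+1}$ itself always has a private edge in $\H_{i+1}$ with respect to $T'$, namely any new edge $E$ not hit by $T^*$: such $E$ exists by assumption, and $E \cap T' = \{v_{i+1}\}$. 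So when we run the parent procedure for $i+1$ on $T'$, it only removes vertices with empty $\priv_i$, and $v_{i+1}$ is never eligible for removal (its private edge is even in $\H_{i+1}$, but more to the point, one must check it is not removed; here it helps that new edges are not in $\H_i$, so I should instead argue directly that the procedure cannot strip $v_{i+1}$ — see below). The procedure therefore stabilizes at some minimal transversal $T^{**}$ of $\H_i$ with $v_{i+1} \in T^{**}$ and $T^{**} \setminus \{v_{i+1}\} \subseteq T^*$.

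The main obstacle is to show that the parent procedure applied to $T'$ returns exactly $T^*$, i.e., that it removes precisely $v_{i+1}$ and nothing else, ending at $T^*$. For this I would argue as follows. Since $T^* \in Tr(\H_i)$, every $v \in T^*$ has a private edge $E_v \in \priv_i(T^*, v)$. When we form $T' = T^* \cup \{v_{i+1}\}$, the vertex $v$ loses $E_v$ as a private edge only if $v_{i+1} \in E_v$. Now the key point: $v_{i+1}$ cannot be removed by the procedure run on $T'$, because at every stage the current set $T$ satisfies $T \supseteq (T^* \setminus \{\text{removed vertices}\}) \cup \{v_{i+1}\}$ and... actually the clean argument is that $v_{i+1}$ has a private edge in $\H_i$? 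No — $v_{i+1} \notin V_i$, so $v_{i+1}$ is not even a vertex of $\H_i$, and $\priv_i(T', v_{i+1})$ is vacuously empty, so the procedure \emph{would} try to remove $v_{i+1}$. The resolution is the order of removal: the procedure removes the vertex of \emph{smallest index} with empty $\priv_i$. I would argue that once all of $V_i$-vertices with empty private sets in the current trace have been removed in the right order, we reach $T^*$ (which has all private sets nonempty in $\H_i$), at which point only $v_{i+1}$ remains removable, giving $\parent(T', i+1)$ that equals $T^*$ \emph{if and only if} removing $v_{i+1}$ is the last step. So the real content is: the greedy removal from $T'$, restricted to its action on $V_i \cap T' = T^*$, never removes a vertex of $T^*$, because doing so would require that vertex to have lost all private edges, but each $v \in T^*$ retains a private edge in $\H_i$ throughout — here one uses that private edges of $v$ in $\H_i$ that avoid $v_{i+1}$ are unaffected, and shows by a careful induction that such an edge exists for each $v \in T^*$ as long as we have not yet removed any vertex of $T^*$; combined with the fact that $v_{i+1}$ is removed as soon as it is the unique smallest-index bad vertex, the procedure removes only $v_{i+1}$ and halts at $T^*$. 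I would write this induction out carefully, as it is the one genuinely delicate point; everything else is bookkeeping about which edges are new versus old.
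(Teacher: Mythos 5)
Your overall structure is the same as the paper's: split on whether $T^*$ already hits every edge of $\H_{i+1}$, and in the second case add $v_{i+1}$ and argue that the greedy procedure strips exactly that vertex. Your first case is correct and complete. The second case, however, contains a genuine gap built on a false worry. You write that $T^*\cup\{v_{i+1}\}$ ``need not be minimal, since \ldots\ a vertex $v\in T^*$ whose only private edges with respect to $T^*$ in $\H_i$ all contain $v_{i+1}$ now has no private edge,'' and you then defer the resolution to ``a careful induction'' on the order in which the greedy procedure removes vertices. The scenario you are guarding against is vacuous: $\H_i=\H[V_i]$ is the \emph{induced} subhypergraph, so every edge of $\H_i$ is a subset of $V_i$ and therefore cannot contain $v_{i+1}$. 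Consequently, for every edge $E$ of $\H_i$ one has $E\cap(T^*\cup\{v_{i+1}\})=E\cap T^*$, hence $\priv_i(T^*\cup\{v_{i+1}\},v)=\priv_i(T^*,v)\neq\emptyset$ for every $v\in T^*$, and likewise $\priv_{i+1}(T^*\cup\{v_{i+1}\},v)\supseteq\priv_i(T^*,v)\neq\emptyset$, so $T^*\cup\{v_{i+1}\}$ \emph{is} a minimal transversal of $\H_{i+1}$ (this is exactly the paper's one-line argument: a private edge of $v$ in $\H_i$ that failed to survive would have to contain $v_{i+1}$, contradicting that it lies in $\H_i$).

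This matters for more than elegance. If a vertex of $T^*$ really could lose all of its $\H_i$-private edges upon adding $v_{i+1}$, the greedy procedure (which removes the \emph{smallest}-index offending vertex, and $v_{i+1}$ has the largest index in $T^*\cup\{v_{i+1}\}$) would strip that vertex first and the parent would not be $T^*$ --- i.e., no induction on removal order could rescue the claim; the lemma would simply be false in that scenario. So the missing observation is not a detail to be filled in by bookkeeping: it is the one fact that makes the second case work. Once you have it, your ``delicate induction'' collapses to: the unique vertex of $T^*\cup\{v_{i+1}\}$ with empty $\priv_i$ is $v_{i+1}$ (indeed $v_{i+1}$ lies in no edge of $\H_i$ at all), the procedure removes it in one step, and the resulting set $T^*$ has no removable vertex, so $\parent(T^*\cup\{v_{i+1}\},i+1)=T^*$.
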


\begin{proof}
    As $T^*\in Tr(\H_i)$, we have $\priv_i(T^*,v)\neq\emptyset$ for every $v\in T^*$, and since every edge of $\H_i$ is an edge of $\H_{i+1}$, we derive that $\priv_{i+1}(T^*,v)\neq\emptyset$ for every $v\in T^*$.
    Hence, if $T^*$ is a transversal of $\H_{i+1}$, then we obtain $T^*\in Tr(\H_{i+1})$, and thus, $\parent(T^*, i+1) = T^*$ by definition, yielding the first item of the lemma.

    Let us thus assume that $T^*$ is not a transversal of $\H_{i+1}$.
    Then, there is an edge $E$ in $\H_{i+1}$ that is not hit by $T^*$.
    As $T^*$ is a transversal of $\H_i$, that edge must contain the vertex $v_{i+1}$.
    We derive that $T:=T^*\cup \{v_{i+1}\}$ is a transversal of $\H_{i+1}$, which furthermore satisfies $\priv_{i+1}(T, v_{i+1})\neq\emptyset$ as $E$ is not hit by $T^*$. 
    Let us show that $\priv_{i+1}(T, v)\neq\emptyset$ holds for any other vertex $v\in T$. 
    Consider one such vertex $v\neq v_{i+1}$ and $E\in \priv_i(T^*, v)$. 
    Such an edge exists as $T^*\in Tr(\H_i)$.
    If $E\not\in \priv_{i+1}(T, v)$, then we have that $v_{i+1}\in E$, which contradicts the fact that $E$ belongs to $\H_i$.
    Hence, $E\in \priv_{i+1}(T, v)$, and the lemma follows.
\lncsqed{}\end{proof}

The $\parent$ relation defines a rooted tree, called the \emph{solution tree}, with vertex set $\{(T, i): T\in Tr(\H_i),\ 0\leq i\leq n\}$, leaves $\{(T,n) \mid T \in Tr(\H)\}$, and root $(\emptyset,0)$.
Our generation algorithm consists in a traversal of that tree, starting from its root, that outputs every leaf.
Lemma~\ref{lem:parent} ensures that each node $(T, i)$, $0<i\leq n$, can be obtained from its parent.
Lemma~\ref{lem:child} ensures that each node $(T, i)$, $0\leq i<n$, has at least one child, i.e., that every branch of the tree leads to a different minimal transversal of $\H$.
In what follows, the set $T$ of an internal node $(T,i)$ is referred to as a \emph{partial solution}; see Figure~\ref{fig:tree} for a representation of the tree.
The complexity of the algorithm depends on the delay 
needed to generate the children of an internal node, which is formalized in the next theorem.

\begin{figure}
    \centering
    \includegraphics[height=\figureTreeHeight{}]{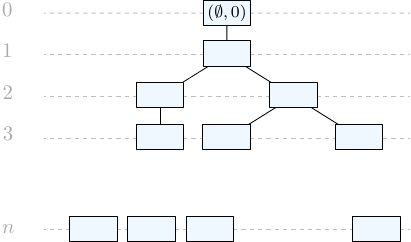}
    \caption{A representation of the solution tree defined by the $\parent$~relation.}
    \label{fig:tree}
\end{figure}

\begin{theorem}\label{the:og}
    Let $f,s\colon \N^2\to \mathbb{Z}^+$ be two functions.
    Suppose there is an algorithm that, given a hypergraph $\H$ on vertices $v_1,\dots,v_n$ and $m$ edges, an integer $i\in \intv{0}{n-1}$, and $T^*\in Tr(\H_i)$, enumerates $\children(T^*,i)$ with delay $f(n,m)$ and using $s(n,m)$ space.
    Then, there is an algorithm that
    enumerates the set $Tr(\H)$ with delay $O(n\cdot f(n,m))$ and total space $O(n \cdot s(n,m))$.
\end{theorem}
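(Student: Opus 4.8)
The plan is to perform a depth-first traversal of the solution tree, using the children-enumeration subroutine as a black box at each node, and to argue that (i) the delay between two consecutive outputs is $O(n \cdot f(n,m))$, and (ii) the total space used is $O(n \cdot s(n,m))$. Recall that by Lemmas~\ref{lem:parent} and~\ref{lem:child}, the $\parent$ relation defines a rooted tree whose nodes are the pairs $(T,i)$ with $T \in Tr(\H_i)$, whose root is $(\emptyset, 0)$, whose leaves are exactly the pairs $(T,n)$ with $T \in Tr(\H)$, and in which every internal node has at least one child. So a traversal that outputs every leaf will enumerate $Tr(\H)$ exactly once per element, with no duplicates, since distinct leaves of a tree are reached by distinct root-to-leaf paths.

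First I would set up the DFS so that it is run \emph{without} materializing the whole tree: at a node $(T^*, i)$ with $i < n$, we invoke the given algorithm to enumerate $\children(T^*, i) \subseteq Tr(\H_{i+1})$ one child at a time with delay $f(n,m)$, recursing into each child as it is produced before asking for the next one. The recursion stack holds at most one active child-enumerator per level, and there are at most $n+1$ levels, so the stack depth is $O(n)$; each enumerator uses $s(n,m)$ space, giving total space $O(n \cdot s(n,m))$ as claimed. (The partial solution $T^*$ at each level has size $O(n)$ and is subsumed by this bound.) The key point making this work is that the children subroutine depends only on $\H$, on $i$, and on $T^*$, all of which are available locally on the stack; in particular, we never need to store previously output leaves or previously visited nodes.

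For the delay bound, the standard obstacle with naive DFS is that when the traversal ``backtracks'' out of a deep subtree whose leaves have already all been output, it may spend a long time climbing back up through internal nodes before reaching the next unexplored branch, and one must bound that idle time. I would resolve this with the usual \emph{alternating output} (push/pop output) trick for DFS on trees of bounded height: output a leaf's solution either when first entering it (on the way down) or when leaving it (on the way up), choosing the parity along each root-to-leaf path so that between any two consecutive outputs the traversal performs only $O(n)$ node transitions; at each transition we either advance a child-enumerator (cost $f(n,m)$, since each call to the subroutine has delay $f(n,m)$) or perform $O(1)$ bookkeeping plus possibly re-deriving a child via $\parent$ in $\mathrm{poly}(n,m)$ time. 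Since $f, s$ map into $\N^*$, we have $f(n,m) \geq 1$ and the polynomial bookkeeping per transition is absorbed into $f(n,m)\cdot n^{O(1)}$; folding the polynomial factor into $f$ (or noting $f(n,m) = f(n,m)\cdot n^{O(1)}$ up to the implicit constants allowed by the $O(\cdot)$) yields delay $O(n \cdot f(n,m))$, and the time before the first output and after the last output is bounded the same way. The main technical care, then, is exactly this amortization argument: verifying that the alternating-output scheme guarantees at most $O(n)$ internal-node transitions between consecutive emitted leaves, which is where I would spend the most effort.
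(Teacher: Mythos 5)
Your proposal is correct and follows essentially the same route as the paper: a DFS of the solution tree with lazily generated children, keeping at most one paused child-enumerator per level, which gives the $O(n\cdot s(n,m))$ space bound and the $O(n\cdot f(n,m))$ delay bound. The only difference is that the alternating-output amortization you plan to spend effort on is unnecessary here: since every internal node has at least one child (Lemma~\ref{lem:child}) and all leaves sit at depth exactly $n$, a plain DFS already makes at most $2n$ enumerator calls (at most $n$ going up, at most $n$ going down) between consecutive leaves, which is exactly the ``twice the height times the delay'' bound the paper uses.
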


\begin{proof}
    We assume the existence of an algorithm \algoa{} of the first type, and describe an algorithm \algob{} listing $Tr(\H)$ with the desired delay. 
    The algorithm first chooses an ordering $v_1,\dots, v_n$ of the vertex set of $\H$ in $O(n)$ time.
    Then, it runs a DFS of the solution tree defined by the parent relation above, starting from its root $(\emptyset, 0)$ and only outputting sets $T$ corresponding to visited leaves $(T, n)$.
    Recall that these sets are exactly the minimal transversals of $\H$.
    First, the correctness of \algob{} follows from Lemmas~\ref{lem:parent} and~\ref{lem:child}.
    Let us now consider the time 
    and space 
    complexity of such an algorithm.
    At first, the root $(\emptyset, 0)$ is constructed in $O(1)$ time
    and space.
    Then, for each internal node $(T^*, i)$ of the tree, the set $\children(T^*, i)$ is generated with delay $f(n, m)$ 
    and space $s(n, m)$ 
    by \algoa{}.
    We furthermore note that when visiting a node of the solution tree, we do not need to generate all its children.
    We can instead compute one child that has not been processed yet, and continue the DFS on that child. Thus, when we visit some node $(T, i)$, we only need to store the data of the $i$ paused executions of \algoa{} enumerating the children of the ancestors of $(T, i)$ including the root $(\emptyset, 0)$, together with the data of the current instance of \algoa{} listing $\children(T, i)$. 
    Hence, the maximum delay of \algob{} between two consecutive outputs is bounded by twice the height of the solution tree times the delay of \algoa{}, and the same goes for the space complexity.
    Since the height of the tree is $n$, we get the desired bounds of $O(n\cdot f(n,m))$ 
    and $O(n\cdot s(n,m))$ 
    for the delay 
    and space complexities 
    of~\algob{}, respectively.
\lncsqed{}\end{proof}


Note that only a multiplicative factor of $f(n, m)$ is added to the input size in the delay of the algorithm given by Theorem~\ref{the:og}. 
Thus, an algorithm generating children with \FPT-delay $f(k)\cdot \poly(n, m)$, for some parameter $k$ and computable function $f$, yields an \FPT-delay algorithm for \transenum{} parameterized by~$k$.

\section{Parameterized children generation}\label{sec:children-generation}

In hypergraphs, we show that the children of an internal node of the solution tree, as defined in Section~\ref{sec:og}, may be enumerated with \FPT-delay parameterized by the weak degeneracy and the dimension.
By Theorem~\ref{the:og}, this leads to an algorithm listing minimal transversals in hypergraphs with \FPT-delay parameterized by the weak degeneracy and the dimension, i.e., Theorem~\ref{thm:main}.

Let $\H$ be a hypergraph of weak degeneracy $d$ and dimension $k$, and let $v_1, \dots, v_n$ be an elimination ordering of $\H$ witnessing the degeneracy. By definition, for any $i\in\intv{0}{n-1}$ and $E\in \inc_{i+1}(v_{i+1})$, we have $|\inc_{i+1}(v_{i+1})|\leq d$ and ${|E|\leq k}$.
We fix $i\in \intv{0}{n-1}$ and $T^*\in Tr(\H_i)$ in what follows.




\begin{lemma}\label{lem:k-private}
    For any $X\subseteq V_{i+1}\setminus T^*$, if $T^*\cup X\in Tr(\H_{i+1})$, then every $x\in X$ has a distinct private edge in $\inc_{i+1}(v_{i+1})$, and thus, $|X|\leq d$.
\end{lemma}

\begin{proof}
    This holds since $|\inc_{i+1}(v_{i+1})|\leq d$ and no edge in $\H_i$ may be the private edge of $x\in X$ with respect to $T^* \cup X$ as all such edges are hit by~$T^*$.
\lncsqed{}\end{proof}


\begin{lemma}\label{lem:children-generation}
    The set $\children(T^*,i)$ can be generated in $k^d \cdot n^{O(1)}$ time.
\end{lemma}

\begin{proof}
    Recall from Lemma~\ref{lem:child} that either $\children(T^*,i)=T^*$ or $T^*\cup \{v_{i+1}\}$ belongs to $\children(T^*,i)$.
    Clearly, these two situations can be sorted out in $n^{O(1)}$ time.
    In the first case, we have generated the children within the claimed time.
    Let us thus assume that $\children(T^*,i)\neq T^*$ and $T^*\cup \{v_{i+1}\}$ has already been produced as a first child in $n^{O(1)}$ time.
    
    We now focus on the generation of sets $X\subseteq V_{i+1}\setminus T^*$ with $v_{i+1}\notin X$ such that $T^*\cup X\in Tr(\H_{i+1})$.
    For each edge $E$ in $\inc_{i+1}(v_{i+1})$, we guess at most one element $x_E\neq v_{i+1}$ in $E$ to be the element of $X$ that has $E$ as its private edge.
	As $\inc_{i+1}(v_{i+1})$ has at most $d$ edges and each of these edges has size at most $k$ and contains $v_{i+1}$, we get at most $(k+1-1)^d$ choices for $X$.
    For every such set $X$, we check whether the obtained set $T:=T^*\cup X$ is a minimal transversal of $\H_{i+1}$ satisfying $T^*=\parent(T,i+1)$ in $n^{O(1)}$ following the definition.
    This takes $k^d \cdot n^{O(1)}$ time in total.
    By Lemma~\ref{lem:k-private}, this procedure defines an exhaustive search for the children of the node $(T^*,i)$ in the solution tree.
\lncsqed{}\end{proof}

Theorem~\ref{thm:main} then follows by Theorem~\ref{the:og} and Lemma~\ref{lem:children-generation}.
In Lemma~\ref{lem:children-generation}, we do not need to store all such sets $X$ as they can be generated following a lexicographic order on the vertices. So, this approach only needs space that is polynomial in~$n$.

\section{Consequences for minimal domination}\label{sec:dom}

Let $G$ be a graph of degeneracy $d$ and maximum degree $\Delta$. 
We show that, as a corollary of Theorem~\ref{thm:main}, we obtain an \FPT-delay algorithm listing the minimal dominating sets of $G$ within the time bounds of Question~\ref{qu:conte-dom}.

In what follows, let $v_1, ..., v_n$ be an elimination ordering of $G$ witnessing the degeneracy, and recall that it can be computed in time $O(|V(G)|+|E(G)|)$~\cite{MatulaBeck83}. Further, recall that $\NG$ denotes the hypergraph of closed neighborhoods of $G$ (see Section~\ref{sec:prelims}).
Note that $v_1, ..., v_n$ witnesses that the weak degeneracy of $\NG$ is at most $d+1$.
Indeed, for any $i\in \intv{1}{n}$, the number of hyperedges in $\inc_i(v_i)$ is bounded by the number of neighbors of $v_i$ in $\{v_1,\dots, v_{i-1}\}$ plus possibly one if the neighborhood of $v_i$ is contained in $\{v_1,\dots, v_{i-1}\}$.
We furthermore have the following relation between the dimension of $\NG$ and $\Delta$.

\begin{observation}\label{obs:degree-dimension}
    The dimension of $\NG$ is at most $\Delta(G)+1$.
\end{observation}

\begin{proof}
    Let $E$ be a hyperedge in $\NG$.
    Then, $E=N[u]$ for some $u\in V(G)$, and hence, $|E|$ is at most the degree of $u$ plus one, which is bounded by $\Delta(G)+1$.
\lncsqed{}\end{proof}

As a corollary of Theorem~\ref{thm:main} and Observation~\ref{obs:degree-dimension}, we obtain a $(\Delta+1)^{d+1}\cdot n^{O(1)}$-time algorithm for \domenum{}, and thus, a $\Delta^{f(d)}\cdot n^{O(1)}$-time algorithm for \domenum{} for a computable function $f$. This answers Question~\ref{qu:conte-dom} in the affirmative.

It should be mentioned that using the Input Restricted Problem technique (IRP for short) from~\cite{cohen2008generating}, more direct arguments yield an \FPT{} algorithm parameterized by $\Delta$.
This is of interest if no particular care on the dependence in the degeneracy is required.
In a nutshell, this technique amounts to reducing the enumeration to a restricted instance consisting of a given solution $S$ and a vertex $v$ not in $S$, in the same (but even more restricted) spirit as in Section~\ref{sec:children-generation}.
Specifically, the authors show that if such an enumeration can be solved in polynomial time (in the size of the input only), then the general problem may be solved with polynomial delay.
It can moreover be seen that such a technique preserves \FPT{} delay, a point not considered in~\cite{cohen2008generating}.
Then, using Observation~\ref{obs:degree-dimension} and noting that the degree of $\NG$ is bounded by $\Delta +1$ as well, the IRP can be solved in $2^{\Delta^2}\cdot n^{O(1)}$ time, and using \cite[Theorems 5.8 and 5.9]{cohen2008generating}, we deduce an \FPT{}-delay algorithm parameterized by $\Delta$ for \domenum{}.
Note that more work is to be done in order to get the dependence in the degeneracy, or for such a technique to be applied to hypergraphs of bounded degeneracy and dimension, namely to obtain Theorem~\ref{thm:main}.
In order to get such time bounds, the algorithm of~\cite{cohen2008generating} must be guided by the degeneracy order.
The interested reader will notice that this guided enumeration is in essence what is proposed by the ordered generation technique, earlier introduced by Eiter et al.~in~\cite{eiter2002new} for hypergraph dualization specifically, while the work of \cite{cohen2008generating} aimed at being more general by considering hereditary properties, with both approaches building up on ideas from~\cite{tsukiyama1977new}.
 
\section{Discussion and limitations}\label{sec:discussion}

We exhibited an \FPT-delay algorithm for \transenum{} parameterized by the weak degeneracy and dimension, yielding an \FPT-delay algorithm for \domenum{} parameterized by the (degeneracy and) maximum degree, answering a question in \cite{conte2019maximal}. 
In light of the existence of an \FPT{}-delay algorithm for \domenum{} parameterized by the maximum degree,
it is tempting to ask whether it admits an \FPT{}-delay algorithm for other structural parameters.
However, for well-studied parameters like the clique number in graphs~\cite{bonamy2019kt} or poset dimension in comparability graphs~\cite{bonamy2020comp}, even the existence of an \XP{}-delay algorithm for \domenum{} remains open.
A natural strengthening of Question~\ref{qu:conte-dom} and weakening of Question~\ref{qu:big-question}, would be to get rid of the maximum degree:

\begin{question}\label{qu:fpt-degen}
    Can the minimal dominating sets of an $n$-vertex $d$-degenerate graph be enumerated with delay
    $f(d) \cdot n^{O(1)}$ 
    for some computable function $f$?
\end{question}

However, it is valuable to note that the techniques presented here seem to fail in the context of hypergraphs, as an \FPT-delay algorithm listing children parameterized by the weak degeneracy (for any elimination ordering witnessing the weak degeneracy) yields an \FPT-delay algorithm for \textsc{Multicolored Independent Set} parameterized by the number of colors, a notorious $W[1]$-hard problem~\cite{cygan2015parameterized}.
Recall that, in \textsc{Multicolored Independent Set}, the vertices of a graph $G$ are colored with colors $1,\dots, k$, and we have to find an independent set containing exactly one vertex from each color.
Formally, we prove the following.

\begin{theorem}
    Let $G$ be an instance of \textsc{Multicolored Independent Set} with colors $1,\dots, k$ with $k,|E(G)| \geq 2$. Then, there exists an $n$-vertex hypergraph $\H$ of weak degeneracy $k+1$ such that, for any elimination ordering $v_1,\dots,v_n$ witnessing the weak degeneracy, there exists a minimal transversal $T^*\in Tr(\H_{n-1})$ whose children are all multicolored independent sets of $G$ except for one.
\end{theorem}

\begin{proof}
We construct an instance of children generation as follows.
We start with the vertices of $G$ and add the vertices $w,u_1,\ldots,u_{k+2}$.
For each color $1\leq \ell\leq k$ and integer $1\leq j\leq k+2$, we create a hyperedge $E_{\ell,j}$ consisting of the vertices of color $\ell$ together with $u_j$.
For each edge $xy\in E(G)$, we add a vertex $z_{xy}$ and two hyperedges $A_{xy}:=\{x,z_{xy}\}$ and $B_{xy}:=\{y,z_{xy}\}$.
Let $Z:=\{z_{xy}: xy\in E(G)\}$.
We add a hyperedge containing $w$, $u_1$, and all the vertices of $Z$.
Lastly, for all $2\leq j\leq k+2$, we add a hyperedge $\{w,u_j\}$ and a hyperedge containing $u_j$ and all the vertices of $Z$.
This completes the construction of the hypergraph $\H$.

We first show that $\H$ has weak degeneracy $k+1$.
The lower bound comes from the minimum degree which is $k+1$.
The upper bound is witnessed by the degeneracy ordering that first removes $u_1,w,u_2,\ldots, u_{k+2}$ in that order, then the $z_{xy}$'s, and then all of the remaining (isolated) vertices in an arbitrary order.

Let $v_1,\dots, v_n$ be any elimination ordering of $\H$ witnessing the weak degeneracy.
Since $u_1$ is the only vertex of minimum degree $k+1$ in $\H$, it must be that $v_n=u_1$.
Thus, $\H_n=\H$ and the hyperedges belonging to $\inc_{n}(v_n)$ are $E_{\ell,1}$ for all $1\leq \ell\leq k$, and the one containing $w$, $u_1$, and all the vertices of $Z$.

Let $T^*:=Z\cup \{u_2,\ldots,u_{k+2}\}$.
By construction, $T^*\in Tr(\H_{n-1})$, $A_{xy}$ and $B_{xy}$ are the only private edges of each $z_{xy}\in T^*$, and $\{w,u_j\}$ is the only private edge of each $u_j\in T^*$.
Consider the generation of the children of $T^*$.
For $X\subseteq V_n\setminus T^*$ to be such that $T^*\cup X\in Tr(\H_n)$, $X$ must hit $E_{\ell,1}$ for all $1\leq \ell\leq k$ (the hyperedge containing $w$, $u_1$, and all the vertices of $Z$ is already hit by vertices in $T^*$).
Further, each such hyperedge must be hit exactly once since $u_1$ hits every such hyperedge and any two other vertices contained in a same such hyperedge have the same trace on $\inc_n(v_n)$. 
Moreover, $X$ should not contain the two endpoints of any $xy\in E(G)$, as otherwise the corresponding vertex $z_{xy}\in T^*$ would lose its private edges.
Hence, among all such possible candidate sets $X$, one consists of the singleton $\{v_n\}$, and the others are multicolored independent sets of~$G$.
\lncsqed{}\end{proof}

Another natural strengthening of Theorem~\ref{thm:main} would be to relax the degeneracy, and require to be \FPT-delay in the dimension only. 
We note however that even the existence of an \XP{}-delay algorithm in that context seems open~\cite{khachiyan2007dualization}.

Finally, for the open questions stated above, an intermediate step would be to aim at \emph{\FPT{}-total-time algorithms}, that is, algorithms producing all the solutions in time $f(k)\cdot N^{O(1)}$ for $k$ the parameter, $f$ some computable function, and $N$ the size of the input plus the output. If successful, it would be interesting to know if these algorithms can be made \emph{\FPT{}-incremental}, that is, if they can produce the $i^\text{th}$ solution in $f(k)\cdot (i+n)^{O(1)}$ time, where $n$ is the size of the input. 
In that direction, \FPT{}-incremental time was obtained in \cite{elbassioni2008some} for the maximum degree.

\paragraph{Acknowledgements.} The second author is thankful to Mamadou Kanté for bringing Question \ref{qu:big-question} to his attention. 
We would like to thank the anonymous reviewers for their careful reading and for providing simpler arguments in Section~\ref{sec:children-generation}.

\bibliographystyle{alpha}
\bibliography{main}

\end{document}